\documentclass[11pt]{article}
\usepackage{fullpage}               
\usepackage{graphicx}
\usepackage{amssymb,amsmath,amsthm}
\usepackage{epstopdf}
\usepackage[sort]{cite}

\newcommand{\cG}{\mathcal{G}}
\newcommand{\cB}{\mathcal{B}}
\newcommand{\cT}{\mathcal{T}}

\usepackage{subcaption}

\newtheorem{theorem}{Theorem}
\newtheorem{lemma}[theorem]{Lemma}

\newcommand{\E}[1]{{\mathbb E} \left [ #1 \right ]}

\DeclareMathOperator{\polylog}{polylog}
 
\title{Matchings via Random Greedy Independent Set: \\A Simpler Algorithm and Analysis}
\author{Andrew McGregor\thanks{University of Massachusetts Amherst. The author is supported by the National Science Foundation under grant CCF-2521579.}}
\date{}                                         

\begin{document}
\maketitle

\begin{abstract}
We show that a simple extension of the randomized greedy maximal independent set algorithm yields a constant approximation for the maximum matching problem. 
The algorithm is a simplification of an algorithm used by Assadi et al.~[JACM 2026] in the context of processing data streams in the dynamic setting where edges may be inserted and deleted. In contrast to the previous work, our analysis avoids consideration of fractional matchings and yields a significantly shorter and more direct proof of the approximation factor for the basic algorithm. 
\end{abstract}

\newpage

\section{Introduction} 

Given an undirected graph on $n$ vertices, the \emph{random greedy maximal independent set} (RGMIS) algorithm picks a vertex $x$ uniformly at random, removes $x$ along with its neighbors $C$, and recurses on the remaining graph, until no vertices remain. The union of all the sets of neighbors is a vertex cover and a recent result by Veldt \cite{Veldt24} shows that, in expectation, it is at most a factor 2 greater than the size of the minimum vertex cover. 

We consider a variant of this algorithm: before removing $C$, we sample a random incident edge from each vertex in $C$. More formally, and to introduce some notation, the  algorithm is as follows:
\begin{enumerate}
\item {\bf Random Greedy MIS:} Given graph $G=(V,E)$, let  $V_1\leftarrow V$ and for $i\geq 1$ let:
\[
\underbrace{x_i\in_R V_{i}}_{\mbox{$x_i$ chosen uniformly at random from $V_i$}}
\quad
\underbrace{C_i\leftarrow \Gamma(x_i) \cap V_i}_{\mbox{neighbors of $x_i$ amongst $V_i$}}
\quad
V_{i+1}\leftarrow V_i - \{x_i\}-  C_i
\]
\item {\bf Edge Sampling:} For $i\geq 1$, let $E_i=\{\{v,f(v)\}: v\in C_i\}$ be the set of sampled edges where $f(v)\in_R \Gamma(v) \cap V_i$ is chosen independently for each $v\in C_i$. 
\item {\bf Output:} Return the maximum matching amongst  edges $\{v,f(v)\}$ for $v\in \cup_{i\geq 1 } C_i$.
\end{enumerate}

A very closely related algorithm was considered by Assadi et al.~\cite{AssadiMatching26}. Their algorithm is based on the same RGMIS approach except that $O(\log n)$ edges are sampled incident to each vertex in $C_i$. They showed that a) the output is a constant approximation for the maximum matching problem in expectation, and b) if the input graph is defined by a stream of edges being inserted and deleted, then the algorithm can be implemented using $O(n \cdot \polylog n)$ space and $O(\log \log n)$ passes.\footnote{
 The sequence $x_1, C_1, x_2, C_2,\ldots $ defined in Step 1 can be computed in the insert-delete data stream model using $O(n \polylog n)$ space and $O(\log \log n)$ passes; the sequence $x_1, x_2, \ldots $ can be constructed as in Ahn et al.~\cite{AhnCGMW15} and the sets $C_i$ can also be determined in $O(\log \log n)$ passes as described by Assadi et al.~\cite{AssadiMatching26}. Step 2 can be implemented in one additional pass via $\ell_0$ sampling, and Step 3 can be performed in post-processing.}

Our contribution is to show that this more elementary version of the algorithm, in which only one edge is sampled incident to each vertex, also works, and to provide a more direct proof of an improved approximation factor. 
Specifically we prove that:

\begin{theorem}\label{thm:main} For any graph $G$,
\[\E{\mu(\cup_i E_i)}\geq  \mu(G)/27\] where $\mu(\cdot)$ is the maximum matching size.
\end{theorem}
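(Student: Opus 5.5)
Let $M^*$ be a maximum matching of $G$, so $|M^*|=\mu(G)$. I will prove the bound by exhibiting, in expectation, a large matching inside $\cup_i E_i$ through a charging argument against $M^*$, and I will never build a fractional matching.

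\textbf{Step 1: edges of $M^*$ lose an endpoint.} Fix an edge $\{u,w\}\in M^*$. Since RGMIS removes every vertex eventually, at least one endpoint of each such edge lies in $\cup_i C_i$. Indeed, the first endpoint removed is either picked as some $x_i$, in which case the other endpoint is in $C_i$, or it is itself in some $C_i$. So, up to a factor $2$, the set $C=\cup_i C_i$ contains at least $\mu(G)$ vertices, each matched in $M^*$ and hence of positive degree at the time it is removed. The task is then to show that a constant fraction of the vertices of $C$, or rather of $C\cap V(M^*)$, end up covered by a matching built from the sampled edges $\{v,f(v)\}$.

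\textbf{Step 2: a greedy matching in the sampled graph.} Each $v\in C_i$ proposes one edge to $f(v)\in V_i$. I will give a lower bound on the expected number of distinct ``targets'' together with a dependency structure that makes a large matching easy to extract. Each $v$ points to exactly one $f(v)$, so the sampled edges form a pseudoforest (out-degree one). A pseudoforest on $k$ non-isolated vertices, where every component has at least one edge, has a matching of size at least about $k/3$: in-stars give one edge per star and paths give at least a third. So
\[
\mu\Big(\cup_i E_i\Big)\ \ge\ \tfrac13\,\big|\{\text{vertices touched by sampled edges}\}\big|\cdot(\text{const}).
\]
This is where the factor $3$ in $27$ should come from. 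The danger is many proposers sharing one target: a star of $C_i$ vertices all pointing to a single $f$ yields only one matching edge. The remaining factor $9$ has to control this collision loss.

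\textbf{Step 3: controlling collisions (main obstacle).} I would charge each $M^*$ edge $\{u,w\}$ with $u\in C_i$ as follows. When $u$ is removed, either the target $f(u)$ is ``private'', meaning few other proposers hit it, or $u$ contributes little. The key structural property to use is the randomness of $x_i$. In RGMIS, a vertex $y\in V_i$ of degree $d$ in $G[V_i]$ is removed at step $i$ with probability $(d+1)/|V_i|$. High-degree vertices therefore tend to be removed early and cannot act as popular targets for long. Concretely, I would bound, for each vertex $y$, the expected number of proposers $v$ with $f(v)=y$. At step $i$, each $v\in C_i\cap\Gamma(y)$ picks $y$ with probability $1/\deg_{V_i}(v)$. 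Summing over the random process and comparing with the probability that $y$ itself is removed (a martingale/ratio argument in the spirit of Veldt's analysis) should give an $O(1)$ expected in-degree per vertex, which is plausibly at most $3$ after accounting for the initial pick. By Cauchy--Schwarz or Jensen, the number of distinct targets is then at least $|C|/3$ in expectation.

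\textbf{Step 4: combining.} Step 1 gives $|C|\ge\mu(G)$ up to a factor $1/3$ from endpoints. Step 3 gives at least $1/3$ of those as distinct targets. Step 2 gives a further factor $1/3$ for extracting a matching. Multiplying yields $\mu(G)/27$. I expect Step 3, the expected in-degree bound under RGMIS, to be the genuine difficulty. The other steps are routine combinatorics, and the constants may need rebalancing among the three factors.
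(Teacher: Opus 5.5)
\textbf{There is a genuine gap: Step 3 carries the whole theorem, and the route you sketch for it cannot work.}

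\textbf{Steps 1, 2 and 4.} Step 1 is correct and in fact lossless: $\bigcup_i C_i$ is a vertex cover, so $|C|\ge\mu(G)$ outright. The ``factor $1/3$ from endpoints'' in Step 4 does not exist; it only serves to make the product equal $27$. Step 2 is fine in its corrected form: a graph of out-degree at most one has a matching of size at least a third of its number of distinct heads. Keeping one in-edge per head leaves disjoint paths and cycles of length at least $2$.

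\textbf{Why Step 3 fails.} The per-vertex first-moment bound you want is easy and true. Write $d_i,n_i$ for degrees in, and the size of, $G[V_i]$. The expected number of proposals into $y$ in round $i$ is $\sum_{v\in\Gamma(y)\cap V_i}\frac{d_i(v)}{n_i}\cdot\frac{1}{d_i(v)}=\frac{d_i(y)}{n_i}$. This is at most the probability $\frac{d_i(y)+1}{n_i}$ that $y$ is removed in that round, so $y$'s expected total in-degree is at most $1$. But this says nothing about collisions:
\begin{itemize}
\item Jensen for the concave map $x\mapsto\min(1,x)$ gives $\Pr[\mathrm{indeg}(y)\ge 1]\le\min(1,\E{\mathrm{indeg}(y)})$, which is the wrong direction.
\item Cauchy--Schwarz needs control of $\E{\sum_y\mathrm{indeg}(y)^2}$, which the first moment does not provide.
\end{itemize}
In $K_{1,s}$ the center has expected in-degree $s/(s+1)<1$. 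Yet with probability $1/(s+1)$ all $s$ leaves enter $C_1$ together and all propose to it, so its second moment is about $s$. A proposal from a low-degree vertex into a high-degree one is almost never alone at its target. So some proposals must be discarded, and a per-vertex average cannot tell you which.

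\textbf{The missing idea: charge per proposal and discard the bad ones.}
\begin{itemize}
\item Orient each edge of $G[V_i]$ from its higher-$d_i$ to its lower-$d_i$ endpoint. Keep a proposal $a\to f(a)$ only if it agrees with this orientation, and then only independently with probability $p$.
\item For every oriented edge $ab$, $\Pr[a\in C_i,\ f(a)=b]=\frac{d_i(a)}{n_i}\cdot\frac{1}{d_i(a)}=\frac{1}{n_i}$. So, given $V_i$ with $m_i$ edges, the kept proposals number $p\,m_i/n_i=\frac{p}{2}\E{|C_i|}$ in expectation.
\item Same round: conditioned on $ab$ being kept, every competing oriented edge $a'b$ in round $i$ has $d_i(a')\ge d_i(b)$. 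Hence each is kept with probability at most $p/d_i(a')\le p/d_i(b)$. A union bound over at most $d_i(b)$ competitors caps the same-round collision probability at $p$.
\item Later rounds: in each later round $j$, a kept proposal into $b$ occurs with probability at most $p\,d_j(b)/n_j=p\Pr[b\in C_j]$, i.e.\ $p$ times the chance that $b$ is removed then. This race is your comparison, but applied conditionally and only to oriented, subsampled proposals, and it bounds later collisions by $p$.
\end{itemize}
Thus each oriented edge is proposed, kept, and never followed by another kept proposal into its head with probability at least $p(1-p)^2/n_i$. The surviving edges have in- and out-degree at most one, and the orientation forbids cycles, so they form vertex-disjoint paths. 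This yields $\E{\mu(\bigcup_i E_i)}\ge\frac{p(1-p)^2}{4}\mu(G)$, which equals $\mu(G)/27$ at $p=1/3$.
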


The spirit of our analysis is similar but by avoiding the introduction of fractional matchings, the final approximation factor is  improved, the algorithm is slightly more efficient, and the proof is significantly shorter. In the context of the $O(\log \log n)$ pass data stream algorithm the improved constant approximation is not particularly important since the accuracy can be boosted by using the RGMIS-based algorithm as a sub-routine.  

\paragraph{Related Work.} 
In addition to the result of Veldt \cite{Veldt24} mentioned above, many aspects and applications of the RGMIS algorithm have been studied in the literature. For example, one application is to correlation clustering on complete graphs where each edge has either a positive or negative label.  Ailon et al.~\cite{AilonCN08}  proved that if RGMIS is applied to the positive edges,
then taking each set $\{x_i\}\cup C_i$ as a cluster results in a 3-approximation to the problem of partitioning the vertices in order to minimize the number of positive edges between clusters plus the number of negative edges within each cluster. The properties of recursive \cite{DalirrooyfardMM26,YoshidaYI12} and parallel \cite{FischerN20,BlellochFS12} implementations have been studied. In the mathematics community, the properties of the independent set generated by the RGMIS algorithm have been studied on various graph families. See  \cite{KrivelevichMMS24,GamarnikG10}, and references therein.  

Graph matchings in the data stream model have been studied for over twenty years. Recently, two central problems in the area have been resolved. It was long known that a factor $2$ could be achieved in a single-pass using $O(n\polylog n)$ space in the insertion-only model \cite{FeigenbaumKMSZ05} but this has just been shown to be essentially optimal \cite{assadi2026semistreamingmatchingsinglepass}. For the dynamic model where edges are both inserted and deleted, Assadi et al.~\cite{AssadiMatching26} showed that $O(\log \log n)$ passes are both sufficient (as mentioned above) and necessary in the standard $O(n\polylog n)$ space setting.

\section{Approximation Factor Analysis}

\begin{figure}[t]
  \centering
  \begin{subfigure}[b]{0.32\textwidth}
    \centering
    \includegraphics[width=\linewidth]{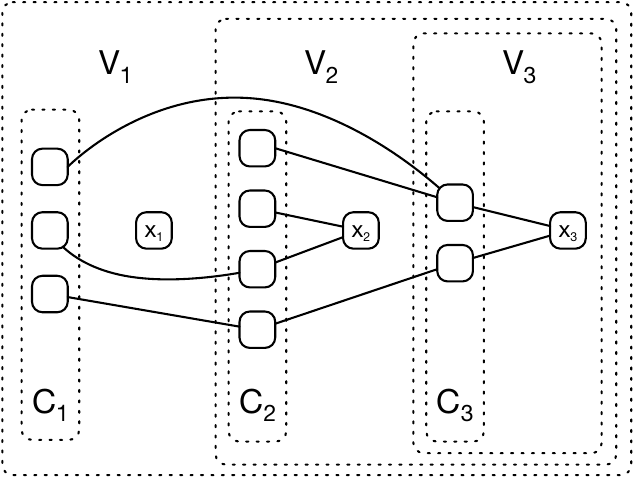}
    \caption{For each $v\in C_i$, the algorithm  picks an incident edge in $G[V_i]$. Call these $E_i$. Only these sampled edges are depicted. ~\\}
    \label{fig:sub1}
  \end{subfigure}\hfill
  \begin{subfigure}[b]{0.32\textwidth}
    \centering
    \includegraphics[width=\linewidth]{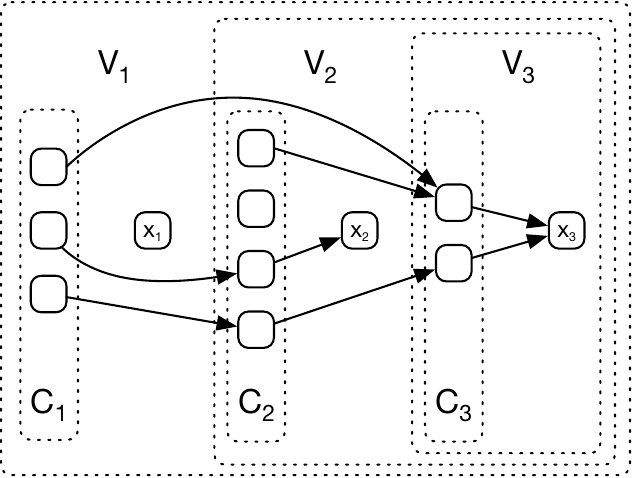}
    \caption{Sampled edges are filtered based on degree and random subsampling. $F_i$ is the set of remaining edges where these edges are directed away from $C_i$.}
    \label{fig:sub2}
  \end{subfigure}\hfill
  \begin{subfigure}[b]{0.32\textwidth}
    \centering
    \includegraphics[width=\linewidth]{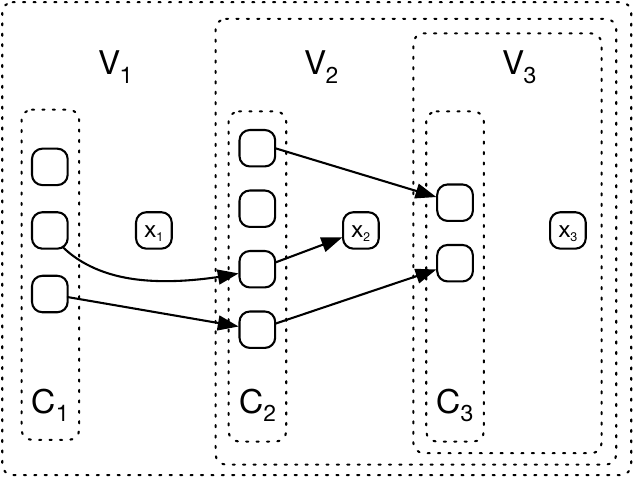}
    \caption{$S_i$ is determined by removing all edges in $F_i$ that point to the same vertices as other edges in $\cup_{j\geq i} F_j$. The remaining edges are vertex disjoint directed paths.}
    \label{fig:sub3}
  \end{subfigure}

  \caption{The algorithm defines  sequences $(x_i)_{i\geq 1}$, $(C_i)_{i\geq 1}$,  $(E_i)_{i\geq 1}$, and $(V_i)_{i\geq 1}$. The sequences $(F_i)_{i\geq 1}$ and  $(S_i)_{i\geq 1}$  are introduced in the analysis.}
  \label{fig:three-subfigs}
\end{figure}

The idea of the proof is to argue that within the sampled edges $\cup_i E_i$ there is a large set of edges forming vertex disjoint paths, and hence a  matching whose size is at least half the total number of edges in these paths. For the sake of analysis, it will be convenient to direct all edges in $G[V_i]$: we direct edges away from their higher degree endpoint (as measured in $G[V_i]$) with ties broken by an arbitrary total ordering of the vertices of $V_i$, i.e., 
\[ D_i = \{uv: \{u,v\}\in G[V_i], d_i(u)>d_i(v) \mbox{ or } (d_i(u)=d_i(v) \mbox{ and }u>v)\} \] 
where $d_i(\cdot )$ is  the degree in the induced subgraph $G[V_i]$.
We next restrict our attention to edges in $D_i$ corresponding to the sampled edges $E_i$ whose tail is in $C_i$. We independently subsample the elements of this set to create $F_i$, i.e., 
\[
F_i \leftarrow  \mbox{ subsample elements of  $\{uv \in D_i: u\in C_i, v=f(u)\}$  with probability $p$} \ .
\]
Lastly, we consider the subset 
\[S_i=\{ab\in F_i: a'b\not \in \cup_{j\geq i} F_j \mbox{ for all $a'\neq a$}\}\] 
i.e., we drop any edge in $F_i$ that  points towards the same vertex as an edge in $F_j$ for some $j\geq i$.

\begin{lemma}
The edges of $\cup_{i\geq 1} S_i$ form a collection of vertex disjoint paths. 
\end{lemma}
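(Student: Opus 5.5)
We view $\cup_{i\geq 1} S_i$ as a directed graph $H$ and show two things: every vertex has in-degree at most one and out-degree at most one in $H$, and $H$ has no directed cycle. A directed graph with these properties is a vertex-disjoint union of directed paths. Two edges with the same unordered pair would form a 2-cycle, so this also rules out parallel edges. Ignoring directions then gives the lemma.

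\emph{Out-degree.} Each $v$ lies in at most one $C_i$, since the sets $C_i$ are disjoint: $C_i$ is removed when forming $V_{i+1}$. For that index the only edge of $F_i$ with tail $v$ is $v f(v)$. Every edge of $F_j$ has its tail in $C_j$, so $v$ is the tail of at most one edge of $\cup_j F_j \supseteq \cup_j S_j$.

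\emph{In-degree.} Suppose $ab\in S_i$ and $a'b\in S_j$ with $a\neq a'$, and assume without loss of generality $j\geq i$. Since $a'b\in S_j\subseteq F_j$ with $j\geq i$, the definition of $S_i$ forbids $ab\in S_i$, a contradiction. So each vertex is the head of at most one edge of $H$; in particular distinct edges have distinct heads.

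\emph{Acyclicity.} This is the step that needs the degree orientation $D_i$ and the nested vertex sets. For an edge $uv\in F_i$ we have $u\in C_i$ and $v=f(u)\in V_i$, so $v$ is still present at round $i$. If $v$ is later the tail of an edge of $F_k$, then $v\in C_k$ with $k\geq i$. Suppose $k=i$. Then $v\in C_i$ is also a tail in round $i$, and $uv\in D_i$ gives $(d_i(u),u)>(d_i(v),v)$ lexicographically. Suppose $k>i$. Then the next edge leaves $v$ at a strictly later round. Now follow a directed cycle $w_0\to w_1\to\cdots\to w_m=w_0$ in $H$. The rounds of its edges, taken as the rounds in which the tails lie in $C$, form a non-decreasing sequence around the cycle, so they are all equal to some $i$. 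Within round $i$ the key $(d_i(\cdot),\cdot)$ strictly decreases along each edge, which is impossible around a cycle. Hence $H$ is acyclic.

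Combining the three facts, the underlying undirected graph of $H$ is a collection of vertex-disjoint paths. The only delicate point is the monotonicity of rounds along a directed walk, which relies on $f(u)\in V_i$ and on the disjointness and nestedness of the sets $V_i$ and $C_i$.
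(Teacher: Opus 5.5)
Your proposal is correct and follows essentially the same route as the paper: out-degree at most one via disjointness of the $C_i$, in-degree at most one directly from the definition of $S_i$, and no cycles because rounds are non-decreasing along edges while the key $(d_i(\cdot),\cdot)$ strictly decreases within a round. Your version spells out two points the paper leaves implicit. First, $f(u)\in V_i$ forces $k\geq i$. Second, the round sequence around a directed cycle must be constant.
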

\begin{proof}
Every endpoint of an edge in $\cup_{i\geq 1} S_i$ has  out-degree at most one (any vertex $a$ is included in at most one set $C_j$ and will be the tail of at most one edge in $E_j$)
and in-degree at most one. The fact the in-degree is at most one follows because if $ab\in S_i$ and $a'b\in S_k$ for some $i\leq k$ then $a'b\in S_k\subseteq F_k\subseteq \cup_{j\geq i} F_{j}$ and this contradicts $ab\in S_i$. Hence, we can deduce $\cup_{i\geq 1} S_i$ is a collection of vertex disjoint paths and cycles. However, cycles are not possible: for any  $a\in C_i$ and  $ab\in S_i$ then either $b$ terminates the path, or $b$ is in $C_j$ for some $j>i$ (and the path can not return to $C_i$), or $b$ is in $C_i$ but ($d_i(b)=d_i(a)$ and $b<a$) or $d_i(b)<d_i(a)$ (and the path can never return to $a$). 
\end{proof}

In  Section \ref{sec:lem1} we will prove the following lemma:

\begin{lemma}\label{lem:1}
$\E{|S_i|}\geq \frac{\gamma_p}{2}\cdot \E{|C_i|}$ where $\gamma_p:=p(1-p)^2$.
\end{lemma}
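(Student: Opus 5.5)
The plan is to condition on the history up to the start of step $i$, so that $V_i$ and the orientation $D_i$ are fixed. Write $d=d_i$ and let $\mathrm{out}(v)$ denote the out-degree of $v$ in $D_i$. I would prove the inequality conditionally and then take expectations. The argument has three factors: a direction factor $\tfrac12$, the subsampling factor $p$, and a survival factor $(1-p)^2$.

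\textbf{Direction factor.} A vertex $v$ lies in $C_i$ exactly when $x_i\in\Gamma(v)\cap V_i$, which happens with probability $d(v)/|V_i|$. Given $v\in C_i$, the choice $f(v)$ is uniform on $\Gamma(v)\cap V_i$ and independent of $x_i$. So the edge $v f(v)$ lies in $D_i$ with probability $\mathrm{out}(v)/d(v)$. The expected number of $v\in C_i$ whose sampled edge points away from $v$ is therefore
\[
\sum_{v\in V_i}\frac{d(v)}{|V_i|}\cdot\frac{\mathrm{out}(v)}{d(v)}=\frac{|E(G[V_i])|}{|V_i|}=\frac12\sum_{v\in V_i}\frac{d(v)}{|V_i|}=\frac12\,\E{|C_i|\mid V_i}.
\]
Independent subsampling then multiplies this by $p$. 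It remains to show that each edge $ab\in F_i$ survives into $S_i$ with conditional probability at least $(1-p)^2$, given that $ab\in F_i$.

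\textbf{Survival, step $i$.} There are two ways $ab$ can be killed: by a competing edge $a'b$ from the same step $i$, or by one from a later step $j>i$. For step $i$, any competitor $a'$ must satisfy $a'\in C_i\cap\Gamma(b)$, $f(a')=b$, $a'b\in D_i$, and $a'b$ kept by the subsampling. The orientation forces $d(a')\ge d(b)$, so each $a'$ produces a competitor with probability at most $p/d(a')\le p/d(b)$. These events are independent across $a'$ and independent of $a$'s own sample. Summing over at most $d(b)$ neighbours gives $\Pr[\text{no competitor at step } i]\ge 1-\sum q_{a'}\ge 1-p$.

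\textbf{Survival, later steps.} This is the main obstacle. I would use a race argument. Consider any later step $j$ at which $b$ is still in $V_j$. The expected number of new edges $a'b\in F_j$ is at most
\[
p\sum_{a'\in\Gamma(b)\cap V_j,\ a'b\in D_j}\frac{d_j(a')}{|V_j|}\cdot\frac1{d_j(a')}\le p\,\frac{d_j(b)}{|V_j|}.
\]
Meanwhile, $b$ is removed at that step with probability $(d_j(b)+1)/|V_j|$. So in every step the rate of hits is at most $p$ times the rate of removal. Once $b$ is removed it can receive no further edges, because sampled edges live inside $G[V_j]$. A supermartingale or stopping-time argument, comparing the expected hits to the expected removals, should bound the probability of any later hit by $p$. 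That gives survival probability at least $1-p$ for the later steps.

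\textbf{Combining and the difficulty.} The remaining work is to combine the two bounds multiplicatively rather than through a union bound. A union bound only gives $1-2p$, which falls short of $(1-p)^2$. I would first try to condition on the outcome of step $i$: the future process from $V_{i+1}$ depends only on $V_{i+1}$, and the $1-p$ bound for later steps holds uniformly over every possible $V_{i+1}$. This should yield the product $(1-p)^2$. I would also check that conditioning on $ab\in F_i$ does not bias the competitors' samples, since the choices $f$ are independent.

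If the exact race bound resists, I would try a direct product $\prod_j(1-p\,d_j(b)/|V_j|)$ weighted against the removal probabilities.
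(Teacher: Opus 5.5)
Your proposal is correct, and its skeleton matches the paper's. The following steps are all as in the paper:
\begin{itemize}
\item conditioning on $V_i$;
\item the factor $\Pr[ab\in F_i]=p/n_i$, which you aggregate via out-degrees;
\item the union bound over round-$i$ competitors using $d_i(a')\ge d_i(b)$;
\item the multiplicative combination by conditioning on the outcome of round $i$.
\end{itemize}
One small inaccuracy: the round-$i$ competitor events are not independent across $a'$. They are coupled through $x_i$, which conditioning on $ab\in F_i$ restricts to $\Gamma(a)$. You only use a union bound, so nothing breaks.

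The genuine difference is how the later-round race is closed, which you left as ``should''. The paper tracks the \emph{first} bad round. It introduces disjoint events $\cB_j$ (round $j$ is the first bad round) and $\cT_j$ ($b\in C_j$ with no bad round so far). It then proves $\Pr[\cB_j]\le\frac{p}{1-p}\Pr[\cT_j]$; the $1-p$ absorbs rounds where $b$ is removed and hit simultaneously. Finally it solves $\sum_j\alpha_j\le\frac{p}{1-p}\bigl(1-\sum_j\alpha_j\bigr)$.

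Your comparison of expected hits to expected removals gives the same bound more directly. Let $H$ be the number of pairs $(j,a')$ with $j>i$ and $a'b\in F_j$. Your per-round estimate, the Markov property of $(V_j)$, and the tower property give
\[
\E{H \mid V_{i+1}} \;\le\; p\sum_{j>i}\Pr[b\in C_j\mid V_{i+1}] \;\le\; p ,
\]
since $b$ lies in at most one $C_j$. By Markov, $\Pr[H\ge 1\mid V_{i+1}]\le p$, uniformly over $V_{i+1}$, which is exactly what your combining step needs.

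This counting route avoids the disjointness bookkeeping and the same-round overlap entirely, at no cost in the constant. The paper's first-passage argument makes the event structure explicit but is otherwise a longer route to the same bound $p$.
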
 

Theorem \ref{thm:main} then follows:
\begin{equation} \label{eq:main}
\E{\mu(\cup_{i\geq 1} E_i)}
\geq \E{\sum_{i\geq 1} |S_i|/2}
\geq  \frac{\gamma_p}{4}\cdot \E{\sum_{i\geq 1} |C_i|}\geq \frac{\gamma_p}{4} \cdot  \mu(G)
\end{equation}
and then optimizing over $p$ to get $p=1/3$.
Specifically, the first inequality follows because $ab\in S_i$ implies $\{a,b\}\in E_i$ and the maximum matching in a collection of vertex disjoint paths is at least half the number of edges in these paths. The last inequality follows because the size of any matching is upper bounded by the size of any vertex cover, and  $\cup_i C_i$ is a vertex cover.

\subsection{Proof of Lemma \ref{lem:1}}\label{sec:lem1}
It suffices to show the relationship holds conditioned on any  $V_i$ since the lemma then follows by taking expectations over $V_i$. We henceforth treat the $V_i$ conditioning as implicit.

Let  $n_i$ and $m_i$ be the number of vertices and edges in the induced subgraph $G[V_i]$. First note that \[\E{|C_i|}=\E{d_i(x_i)}=2m_i/n_i \ .\] 
Next, fix an edge $ab\in D_i$. If we can show $\Pr[ab\in S_i]\geq \gamma_p/n_i$ then \[\E{|S_i|}\geq \gamma_p \cdot m_i/n_i = \frac{\gamma_p}{2} \E{|C_i|}\] by linearity of expectation.
To do this,  we start with the following definitions.
\begin{itemize}
\item Let $X$ be the event $ab\in F_i$.
\item For $j\geq i$, let $Y_j$ be the event there exists $a'\neq a$ such that $a'b\in F_j$.
\end{itemize}
 Then,
\begin{equation}\label{eq0}
\Pr[ab\in S_i]= \Pr[X \cap \overline{Y}_i \cap \overline{Y}_{i+1} \cap  \ldots ] 
= \Pr[X]  (1-\Pr[Y_i \mid X]) (1-\Pr[\cup_{j\geq i+1} Y_j \mid X, \overline{Y}_i])\ .
\end{equation}
We bound each term as follows: \begin{equation}\label{eq1}
\Pr[X] = \Pr[a\in C_i, b=f(a)] \cdot p= d_i(a)/n_i \cdot 1/d_i(a) \cdot p=p/n_i \ .
\end{equation}
Note that for any edge $a'b\in D_i$ with $a'\neq a$,
 \begin{eqnarray*}
 \Pr[a'b \in F_i \mid X] &\leq & \Pr[a'\in C_i \mid X] \cdot \Pr[f(a')=b \mid X, a'\in C_i] \cdot p 
 \leq  
 1 \cdot 1/d_i(a') \cdot p
 \leq p/d_i(b)
 \end{eqnarray*}
even if $X$ implies $a'\in C_i$. 
Using this observation and the union bound,
\begin{equation}\label{eq2}
\Pr[Y_i \mid X]
\leq \sum_{a' : a'b \in D_i, a'\neq a} \Pr[a'b \in F_i  \mid X]
\leq p
\end{equation}
In the remainder of the proof, we show
\begin{equation}\label{eq3}
\Pr[\cup_{j\geq i+1} Y_j \mid X, \overline{Y}_i]\leq p \ .
\end{equation} 
Substituting \eqref{eq1}, \eqref{eq2}, and \eqref{eq3}, into \eqref{eq0} establishes $\Pr[ab\in S_i]\geq \gamma_p/n_i$ as required. 

\subsubsection{Establishing Equation \eqref{eq3}} \label{eee}
The approach is to show that it is likely that $b$ gets removed from the graph before an edge is added to some $F_j$ that points to $b$. To formalize the argument, we introduce some notation and terminology. Say that round $j$ is \emph{bad} if $a'b\in F_j$ for some $a'\neq a$ and \emph{good} otherwise. Say round $j$ is \emph{terminating} if $b\in C_j$.

\begin{itemize}
\item For $j\geq i$, let \[\cG_j=X\cap \overline{Y}_i \cap \ldots\cap \overline{Y}_{j}\]
i.e., $ab\in F_i$ and rounds $i, \ldots, j$ were good. 
\item 
For $j\geq i+1$, let \[\cB_j=\cG_{j-1} \cap Y_j\] i.e., $ab\in F_i$, rounds $i, \ldots, j-1$ were good but round $j$ was bad. 
\item For $j\geq i+1$, let 
\[\cT_j= \cG_j \cap \{b\in C_j\}\]
i.e., $ab\in F_i$, rounds $i, \ldots, j$ were good and round $j$ was terminating.
\end{itemize}

The events $\cB_{i+1}, \cB_{i+2},\ldots, \cT_{i+1}, \cT_{i+2}, \ldots $ can easily be shown to be mutually disjoint; the only non-trivial thing to note is that for $j<k$, we have that  $\cT_j$ and $\cB_{k}$ are disjoint because $\cT_j$  implies $b$ will not be present in $V_k$.
Set 
\[\alpha_j=\Pr[\cB_j \mid \cG_i]\qquad  \mbox{ and  } \qquad \beta_j=\Pr[\cT_j \mid \cG_i] \ . \] 
These disjointness properties imply  $\sum_{j\geq i+1} \alpha_j +\sum_{j\geq i+1} \beta_j \leq 1$ and so, if we can argue $\alpha_j \leq \frac{p \beta_j}{1-p}$, we can deduce
\begin{eqnarray*}
\Pr[\cup_{j\geq i+1} Y_j \mid X, \overline{Y}_i]= \Pr[ \cup_{j\geq i+1} Y_j \mid \cG_i ]
= \sum_{j\geq i+1} \alpha_j
\leq \frac{p}{1-p} \cdot  \sum_{j\geq i+1} 
\beta_j
\leq \frac{p}{1-p} \cdot (1-\sum_{j\geq i+1} \alpha_j)
 \end{eqnarray*}
 and rearranging establishes  \eqref{eq3}. 
 To prove $\alpha_j \leq \frac{p \beta_j}{1-p}$, note that for any non-empty $V_j$ 
 \begin{eqnarray*}
\Pr[\cB_j \mid V_j,\cG_{j-1}] 
\leq \sum_{a': a'b \in D_j} \Pr[a'b \in F_j \mid V_j,\cG_{j-1}] 
& \leq & \sum_{a': a'b \in D_j} \frac{d_j(a')}{n_j} \cdot \frac{p}{d_j(a')} \\
 &\leq  & {p \, d_j(b)}/{n_j}  \\
 &=&  p \Pr[b\in C_j \mid V_j,\cG_{j-1}] \\
& \leq & p \left (\Pr[\cT_j \mid V_j,\cG_{j-1}] + \Pr[\cB_j \mid V_j,\cG_{j-1}] \right )
\end{eqnarray*}
where, if $b\not \in V_j$ then we treated $d_j(b)$ as zero.
Averaging over $V_j$ and rearranging gives
\[
\Pr[\cB_j \mid \cG_{j-1}] \leq \frac{p}{1-p} \Pr[\cT_j \mid \cG_{j-1}] 
\]
and so
\[
\alpha_j = \Pr[\cB_j \mid \cG_{j-1}] \cdot \Pr[\cG_{j-1} \mid \cG_{i}]
\leq \frac{p}{1-p} \Pr[\cT_j \mid \cG_{j-1}] \cdot \Pr[\cG_{j-1} \mid \cG_i]
= \frac{p}{1-p} \beta_j .
\]

\section{Digression}\label{sec:dd}
A slight modification of the proposed algorithm would be to not just  remove $\{x_i\}$ and $C_i$ at the end of the $i$th step, but to also remove the set $\cup_{v\in C_i} f(v)$. This would ensure  $\cup_i E_i$ has a matching of size $\sum_i \mu(E_i)$ and the analysis could be further simplified  (specifically it would remove  the need for Section \ref{eee}). With this approach we have to slightly modify the analysis to take into account that $\cup_{v\in C_i} f(v)$ needs to be added to the vertex cover but this is doable. However, it seems unlikely that this modification of the algorithm could be implemented in $O(\log \log n)$ passes in the dynamic graph stream model. We share this observation to acknowledge the possibility that  the reader may be interested in another computational model other than the dynamic graph stream model.

\section*{Acknowledgements} ChatGPT and Claude were used to proofread drafts of the paper and perform a literature search on greedy maximal independent set algorithms. All ideation and proofs were by the human author.
The author takes full responsibility for the contents of the paper. 

{\small
\bibliographystyle{abbrv}
\bibliography{matchings}
}
\end{document}